\newif\ifEditMode
\begin{document}
\sloppy

\title{Liquidity Fragmentation or Optimization? \\Analyzing Automated Market Makers Across Ethereum and Rollups}

\titlerunning{Automated Market Makers Across Ethereum and Rollups} %TODO optional, please use if title is longer than one line

\author{Krzysztof M. Gogol\inst{1} \and
Manvir Schneider \inst{2}\and
Claudio J. Tessone \inst{1,3}\and \newline
Benjamin Livshits \inst{4}}

\authorrunning{Gogol et al.}

\institute{University of Zurich \and Cardano Foundation \and UZH Blockchain Center \and
Imperial College London}

\maketitle
\begin{abstract}
        Layer-2 (L2) blockchains inherit Ethereum’s security guarantees while reducing gas fees. As a result, they are gaining traction among traders at Automated Market Makers (AMMs), sparking debate over whether they contribute to liquidity fragmentation of Ethereum.
Our research suggests that such fragmentation is not currently occurring. However, it could emerge in the future—particularly if Liquidity Providers (LPs) recognize the higher returns available on L2s. Using Lagrangian optimization, we develop a model for optimal liquidity allocation across AMMs on Ethereum and its L2s, using staking as a benchmark. We show that, in equilibrium, AMM liquidity provision returns converge to this reference rate.
Additionally, we measure the elasticity of trading volume with respect to Total Value Locked (TVL) in AMMs and find that, on well-established blockchains, an increase in TVL does not necessarily lead to higher trading volume. Finally, our empirical findings reveal that Ethereum’s liquidity pools are oversubscribed compared to those on L2s and often yield lower returns than staking Ether. LPs could maximize their rewards by reallocating more than two-thirds of their liquidity to L2s and staking. 
%Layer-2 (L2) blockchains offer security guarantees of Ethereum while reducing transaction (gas) fees. Consequently, they are gaining popularity among traders at Automated Market Makers (AMMs), but Liquidity Providers (LPs) are lagging behind. By employing Lagrangian optimization, we show that, in equilibrium, liquidity provisions to AMMs should provide returns equal to risk-free rate across all blockchains, and we find the optimal liquidity allocation strategy that maximizes LP's rewards. Our empirical results show that liquidity pools on Ethereum are oversubscribed compared to their counterparties on L2s and deliver lower returns than staking Ether. LPs would maximize their rewards by reallocating over 2/3 of the liquidity to L2s and staking.  Lastly, we measure the elasticity of trading volume with respect to Total Value Locked (TVL) at AMM pools and found that at the well established blockchains an increase in TVL is not associated with an increase in trading volume. \\
%            \noindent\textbf{Keywords:} Automated Market Maker, Decentralized Finance, Rollup
%    \noindent\textbf{Keywords:} Automated Market Makers, Rollup, Liquidity Provider
\end{abstract}

%------------------------------------------------------------------------------

%%%%%%%%%%%%%%%%%%%%%%%%%%%%%%%%%%%%%%%%%%%%%%%%%%%%%%%%%%%%%%%%%%%%%%%%%%%%%%%%%%%%%%%%%%%%%%%%%%%%%%%%%%%%%%%%%%
\section{Introduction}
    \label{sec:intro}
    
Automated Market Makers (AMMs)~\cite{Xu2021SoK:Protocols}, pioneered by Uniswap on Ethereum in 2018~\cite{Adams2020UniswapCore}, are the foundation of Decentralized Exchanges (DEXs). DEXs enable token exchanges without counterparty risk in an atomic blockchain transaction~\cite{Gogol2023SoKDeFi}. They were originally introduced to avoid the inefficiencies of the on-chain order book associated with high on-chain storage costs and security vulnerabilities~\cite{Jensen2021DeFi}. At AMM-DEX, exchange rates are set by a conservation function and current token reserves in the AMM pool. Liquidity Providers (LPs) supply tokens, earning trading fees paid by traders. As a result, a higher trading volume results in increased fees, which are subsequently distributed among LPs. Furthermore, a higher Total Value Locked (TVL) in liquidity pools attracts traders by reducing price impact during swaps, a primary cost component incurred during swap transactions~\cite{adams2024costs}.

Nevertheless, it is not always the optimal strategy for LPs to allocate their liquidity to the AMM pools with the highest trading volume. The trading fees accrued from traders are distributed among LPs in proportion to their respective contributions to the pool's TVL. Therefore, it is the ratio of fees to TVL, often referred to as capital efficiency, that should be the primary consideration for LPs.
LPs must also carefully select token pairs in the AMM pools. Volatile crypto-currency pairs lead to impermanent loss, reducing LP rewards~\cite{Heimbach2022RisksProviders}. Arbitrageurs equalizing token prices across AMM-DEXs, or with Centralized Exchanges (CEXs) amplify LPs' losses from market movement~\cite{milionis2023automated}. The Loss-Versus-Rebalancing (LVR) metric measures LP rewards relative to arbitrageurs~\cite{milionis2023automated}. Finally, should the rewards from liquidity provisions fall below those achieved by staking ETH, which is widely regarded the safest token allocation to yield rewards~\cite{gogol2024liquid}, the rationale behind the liquidity provision becomes questionable. The Loss-Versus-Staking metric compares LP returns to staking rewards~\cite{gogol2024liquid}.

The advent of rollups, a Layer-2 (L2) scaling solution~\cite{Thibault2022SoKRollup}, has shifted DeFi activities from Ethereum to its rollups. The success of rollups lies in their data integrity, which is secured by staked ETH, and the significant reduction in gas fees achieved by offloading computations off-chain~\cite{sguanci2021layer2,gangwal2022survey}. After the Ethereum Dencun upgrade in March 2024, the swap fees on L2s have dropped below \$0.01 per transaction. Consequently, the volume of swap transactions on rollups has surpassed that on the Ethereum mainnet, albeit with lower trading volumes. The reduced gas fees on rollups have facilitated transactions within the \$1-\$10 range, a previously unfeasible scenario on Ethereum. Furthermore, most rollups support the Ethereum Virtual Machine (EVM), allowing for the seamless deployment of AMM-DEXs originally designed for Ethereum onto rollups. Currently, DEXs such as Uniswap, Curve, and their forks have been deployed on rollups, enhancing the diversity of AMM pools accessible to LPs. As a result, LPs have more options to consider for providing liquidity, which serves as the starting point for this research.

Faster block production in rollups, along with cheaper gas fees, impacts the strategies of LPs in L2s. While Ethereum produces blocks every 12 seconds, rollups do it in 0.2 to 2 seconds depending on their implementation. The cheaper gas fees and faster block production allow LPs to adjust their strategies more often, which is especially vital for Concentrated Liquidity Market Makers (CLMM) such as Uniswap (v3)~\cite{adams2024layer2layer2}. With more frequency LP-position rebalancing in CLMMs, LPs earn higher rewards compared to similar pools on Ethereum.

This study investigates liquidity provision to AMM pools on L2s and Ethereum. After developing the theoretical framework for optimal liquidity allocation within the same cryptocurrency pools, we analyze our results using empirical on-chain data. 
%We assessed whether higher LP rewards on rollups, compared to those on Ethereum, can be linked to more efficient rebalancing of LP positions. These outcomes may stem from LPs' suboptimal liquidation allocation. 
We show that providing liquidity to certain AMM pools on rollups is more profitable compared to Ethereum. Despite the fact that liquidity pools on Ethereum have higher trading volumes, they often become oversubscribed, diminishing the rewards for LPs.

%define impermanet loss

%%%%%%%%%%%%%%%%%%%%%%%%%%%%%%%%%%%%%%%%%%%%%%%%%%%%%%%%%%%%%%%%%%%%%%%%%%%%%%%%%%%%%%%%%%%%%%%%%%%%%%%%%%%%%%%%%%
\parab{Related Work.} A comprehensive review of major AMM categories has been made available in~\cite{Xu2021SoK:Protocols} and expanded and updated in~\cite{Ruetschi2024DeFiTaxonomy}. The empirical study of liquidity provisions to AMMs has primarily focused on the Ethereum blockchain, especially for Uniswap (v2)~\cite{Heimbach2021BehaviorExchanges} and Uniswap (v3)~\cite{Fritsch2021ConcentratedMakers,Heimbach2022RisksProviders,Loesch2021ImpermanentV3,miori2024clustering,miori2023defi}. These investigations emphasize impermanent loss as a key indicator for assessing LP profitability across different cryptocurrency pairs. Furthermore, Tiruviluamala et al.~\cite{Tiruviluamala2022AMakers} proposed a framework to address impermanent loss. 

Alternative metrics for comparing LP profitability include LVR and LVS. Milionis et al.~\cite{milionis2023automated} compared LP rewards with arbitrage using loss-versus-rebalancing (LVR), while Fritsch~and~Canidio~\cite{fritsch2024profitability} extended its empirical analysis to more pools and showed that arbitrage profit increases on faster blockchains. Gogol et al.~\cite{gogol2024liquid} introduced loss-versus-staking (LVS) as another comparative measure, evaluating LP returns against staking rewards. Yaish et al.~\cite{Yaish2023Suboptimality} demonstrated the suboptimal behavior of LPs in DeFi lending pools.

Research on AMMs on L2s is nascent, focusing mostly on MEV and arbitrage.
Torres et al.~\cite{torres2024rolling} reported on cycling arbitrage (MEV) within L2s, while Gogol et al.~\cite{gogol2024quantifying,gogol2024l2arbitrage} examined non-atomic arbitrage involving cross-rollup and L2-CEX.
In his pioneering work, Adams~\cite{adams2024layer2layer2} observed that the liquidity concentration at Uniswap (v3) on Arbitrum and Optimism surpasses that on Ethereum by 75\%, indicating that LPs often readjust their positions on L2s. Chemaya~and~Liu~\cite{Chemaya2022Preferecnes} estimated AMM traders preferences for blockchain security on two main L2 networks in comparison to Ethereum.
%Our study builds upon these findings by analyzing the fees/TVL ratio and identifying an optimal LP allocation strategy among AMMs on Ethereum and its L2s.

%Write how profitability of LP can be measured - LVH, LVR, LVS. 

%%%%%%%%%%%%%%%%%%%%%%%%%%%%%%%%%%%%%%%%%%%%%%%%%%%%%%%%%%%%%%%%%%%%%%%%%%%%%%%%%%%%%%%%%%%%%%%%%%%%%%%%%%%%%%%%%%

\parab{Contribution.} This research analyzes both empirically and theoretically the liquidity provision for AMMs on L2s. We assess LP rewards on Uniswap (v3) within Ethereum and its optimistic rollups (Arbitrum, Base, Optimism), and ZK rollups (ZKsync). The contributions of this work are as follows:
%Each hypothesis is validated through the formulation of mathematical equations, followed by simulations and empirical analyses using historical on-chain data from Ethereum and its significant roll-ups: zkSync Era, Arbitrum, and Optimism. This research enhances the understanding of profitability in liquidity provisions at AMM-DEXs, particularly at Curve v2. We assess the profitability of liquidity pools in AMMs by determining the return from fees per unit of invested capital. Additionally, we compute the optimal maximum capital that should be transferred from one pool to another.

\begin{itemize}
    \item Using Lagrangian optimization, we find the optimal allocation of liquidity across staking and AMMs pools on Ethereum and its rollups, with the objective of maximizing LP rewards. We further show that in equilibrium, liquidity provisions to AMMs should provide returns equal to risk-free rate across all blockchains.
    \item We measure the elasticity of trading volume with respect to TVL at AMM pool and found that at the well established blockchains --- Ethereum, Arbitrum and Optimism, contrary to expectations, an increase in TVL is not associated with an increase in trading volume. In contrast, emerging blockchains, Base and ZKsync, exhibit a positive elasticity value, indicating that the volume of trade is positively correlated with TVL on these chains. 
    \item We empirically find that the current allocation of liquidity to WETH-USDC pools of Uniswap (v3) on Ethereum and rollups is not optimal for LPs. The pool on Ethereum tend to be overcapitalized and does not compensate LP for the missed opportunity to stake the entire capital. Specifically, over 66\% of Ethereum liquidity should be reallocated to rollups in order to maximize LP returns and to attain equilibrium with staking rates.
\end{itemize}

%%%%%%%%%%%%%%%%%%%%%%%%%%%%%%%%%%%%%%%%%%%%%%%%%%%%%%%%%%%%%%%%%%%%%%%%%%%%%%%%%%%%%%%%%%%%%%%%%%%%%%%%%%%%%%%%%%

\begin{comment}
\subsubsection{Paper Organization.} 
%Section~\ref{sec:background} provides an introduction to rollups and AMMs. 
Section~\ref{sec:model} presents the theoretical model for evaluating the profitability of liquidity provisions for AMM in various roll-ups. 
Section~\ref{sec:dataset} describes the historical data of the AMM pools that we sourced on-chain, followed by the empirical analysis in Section~\ref{sec:analysis}. 
The final sections, \ref{sec:discussion} and \ref{sec:conclusions}, discuss the results and summarize conclusions. 
\end{comment}
%%%%%%%%%%%%%%%%%%%%%%%%%%%%%%%%%%%%%%%%%%%%%%%%%%%%%%%%%%%%%%%%%%%%%%%%%%%%%%%%%%%%%%%%%%%%%%%%%%%%%%%%%%%%%%%%%%
\section{Background}
    \label{sec:background}

This section presents Layer-2 (L2) blockchain scaling solutions, including roll-ups. Then it introduces Automated Market Makers (AMMs) and their major categories.

%%%%%%%%%%%%%%%%%%%%%%%%%%%%%%%%%%%%%%%%%%%%%%%%%%%%%%%%%%%%%%%%%%%%%%%%%%%%%%%%%%%%%%%%%%%%%%%%%%%%%%%%%%%%%%%%%%
\subsection{Layer-2 Blockchain Scaling}
According to the blockchain scalability trilemma~\cite{delmonte2020decentralization}, blockchains can prioritize only two of three: decentralization, security, or scalability. Ethereum, the main platform for DeFi with the highest TVL, prioritizes decentralization and security. This led to network congestion, high gas fees, and throughput limited to 12 transactions per second (TPS). The Layer 1 (L1) and Layer 2 (L2) scaling solutions address blockchain scalability. L1 scaling involves the development of new blockchains with novel consensus mechanisms~\cite{Lashkari@2021Consensus}, sharding~\cite{Wang@2019Sharding}, and their own physical infrastructure. In contrast, in L2 scaling intensive computations are executed off-chain, with their results being recorded on the underlying L1 blockchain~\cite{sguanci2021layer2,gangwal2022survey}. 

\parai{Rollup.} 
Rollups~\cite{Thibault@2022rollups}, the non-custodial form of L2 scaling, act as blockchains. They generate blocks, execute transactions, and subsequently record compressed batched data on the L1. This approach ensures that the integrity of the rollup data is guaranteed by L1 security, such as staked ETH in the case of Ethereum's rollups. In order to maliciously modify the rollup state history, an attacker must compromise the security of the underlying L1 network.

\parai{Sequencer.} 
A sequencer~\cite{chaliasos2024formalrollups} is an integral component of rollup, tasked with executing and ordering transactions, forming blocks, and creating batches that are uploaded to the L1 chain. By bundling transactions together, rollups provide more gas-efficiency compared to the L1 network. To avoid additional trust assumptions in sequencer operators, rollups use optimistic or zero-knowledge proofs (ZKPs) to ensure the correctness and integrity of the data. Fig.~\ref{fig:rollup} illustrates the architecture of optymistic and ZK-rollups. Presently, major Ethereum optimistic rollups (e.g., Arbitrum, Optimism, Base) and ZK-rollups (e.g., ZKsync, StarkNet) rely on centralized sequencers and, in the case of ZK-rollup, centralized provers. %Ongoing research aims to decentralize sequencers~\cite{motepalli2023soksequencers} and provers~\cite{wang2024mechanismprover} to improve the reliability and security of both types of rollups.

\parai{Optimistic Rollup.} 
Optimistic rollups~\cite{Kalodner@2018Arbitrum} operate on a trust-based system, assuming transactions are valid unless disputed. This approach simplifies the implementation of optimistic rollups, particularly in supporting the Ethereum Virtual Machine (EVM). Consequently, optimistic roll-ups were faster to launch EVM compatibility and attracted higher DeFi adoption. However, the optimistic fraud-proof mechanism can lead to delays in withdrawals. Currently, most optimistic rollups enforce a 7-day challenge period.

\parai{ZK-Rollup.} 
In contrast, ZK-rollups~\cite{Chaliasos@2024AnalyzingRollups} leverage ZKPs to validate the state on L1 immediately after a proof has been generated off-chain by \emph{provers} and submitted to \emph{verifiers}. Verifiers, smart contracts on L1, validate transactions aggregated by the sequencer and confirm their correctness. Consequently, rapid finality is ensured, albeit for increased computational costs. ZK-rollups also offer enhanced compression opportunities, e.g. by posting to L1 only ZKPs instead of all transaction data.

\begin{figure}[!ht]
  \centering
  \begin{tabular}{cc}  % Create a 2-column table for side-by-side figures
    % First subfigure
    \includegraphics[width=0.45\textwidth]{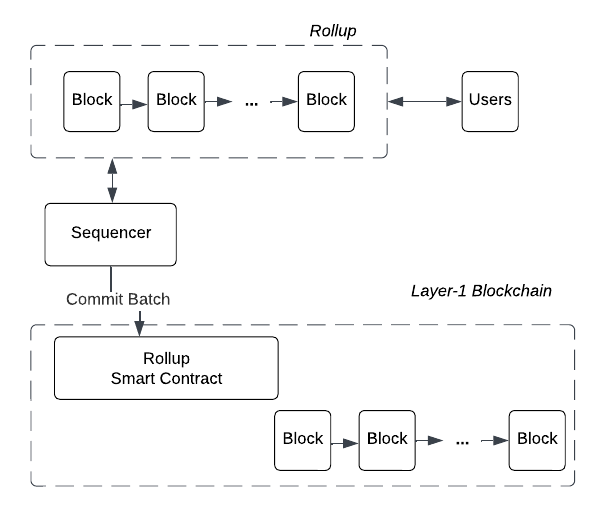} &
    % Second subfigure
    \includegraphics[width=0.45\textwidth]{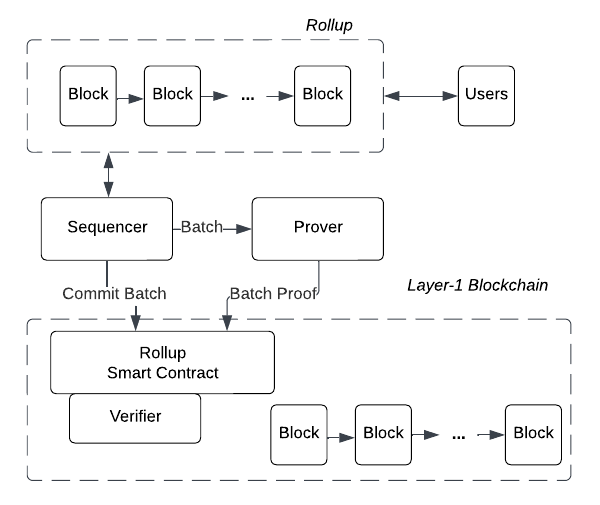} \\
    
    % Captions under each figure
    \small (a) Optimistic rollup & \small (b) ZK-rollup\\
  \end{tabular}

  \caption{High-level architecture of rollups}
  \label{fig:rollup}
\end{figure}

%%%%%%%%%%%%%%%%%%%%%%%%%%%%%%%%%%%%%%%%%%%%%%%%%%%%%%%%%%%%%%%%%%%%%%%%%%%%%%%%%%%%%%%%%%%%%%%%%%%%%%%%%%%%%%%%%%

\subsection{Automated Market Maker}
Automated Market Makers (AMMs) are blockchain protocols to facilitate token exchanges in a completely non-custodial manner. The primary elements of these protocols are liquidity pools comprising token reserves, which are contributed by \emph{Liquidity Providers} (LPs). \emph{Traders} (Swappers) transact with the AMM liquidity pool at exchange rates derived by the AMM, based on the token reserves. Each token swap incurs a trading fee (LP fee), calculated as a fixed percentage of the trade volume. These fees are allocated, in whole or in part, among the LPs within the pool on a proportional basis. \emph{Arbitrageurs} capitalize on price discrepancies between AMM-DEXs or between AMM-DEXs and CEXs to equilibrate market prices.

Constant Function Market Makers (CFMM) represent the main type of AMM and are characterized by a trading function, commonly referred to as a conservation function or reserve curve. This trading function maps token reserves to the invariant, which remains constant throughout each swap transaction. The number of tokens within the pool, $N$, is typically $N=2$. The token reserves within liquidity pools are represented by $x_1,\ldots,x_N$. A diverse array of CFMMs has been proposed \cite{Xu2021SoK:Protocols} to mitigate the trading cost (price impact) for traders and maximize capital efficiency for LPs.

%    \paragraph{Constant Sum} CSMM -- $\sum_{i=1}^N x_i = k$
    \paragraph{Constant Product (CPMM)}
  The first AMM, pioneered by Uniswap, is characterized by a straightforward invariant \cite{Adams2020UniswapCore}, which is still used for trading tokens of high volatility or those whose prices have yet to stabilized:
            $$ x_1 x_2 = L^2.$$
    \paragraph{Constant Product with Concentrated Liquidity (CLMM)}
        CLMM was introduced by Uniswap v3 \cite{Adams2021UniswapCore} to enhance the efficiency of TVL within the liquidity pool, allowing LPs to designate a specific price range, $[p_a, p_b]$, for their liquidity contributions. Consequently, the trading invariant within this specified interval is
            \begin{equation*}\label{eq:uniswapv3}
            \left( x_1+\frac{L}{\sqrt{p_b}} \right)\left(x_2+L\cdot \sqrt{p_a}\right) = L^2.
            \end{equation*}
    % \paragraph{Stableswap Invariant}
    %     A drawback of CLMM is the necessity for LPs to continuously monitor and adjust the designated price range. The Stableswap Invariant, introduced in Curve v1 \cite{Egorov2019StableSwapLiquidity}, autonomously centralizes liquidity around the market price through the utilization of the constant \(A\). This mechanism is tailored for tokens that consistently trade at a fixed price relative to one another, such as stablecoins anchored to 1 USD. The mathematical expression of the invariant is
    %     \begin{equation}\label{eq:curvev1}
    %         K\cdot D^{N-1}\cdot \sum\limits_{i=1}^N x_i +\prod\limits_{i=1}^N x_i = K\cdot D^N + \left(\frac{D}{N} \right)^N
    %     \end{equation}
    %     with
    %         $$K= \frac{A\cdot\prod_{i=1}x_i}{D^N}\cdot N^N,$$
    %     where $A$ is a parameter.
    % \paragraph{Cryptoswap Invariant}
    %     Curve v2 \cite{Egorov2021CurvePeg} represents a further modification of the Stableswap invariant capable of accommodating any token pairs. It employs the same mathematical formulation as equation (\ref{eq:curvev1}), with the variable $K$ being contingent upon the parameters $A$ and $\gamma$, as presented below
    %         \begin{equation}\label{eq:curvev2}
    %         K = A\cdot \underset{\text{ =: }K_0}{\underbrace{\frac{A\cdot\prod_{i=1}x_i}{D^N}\cdot N^N}} \cdot \frac{\gamma^2 }{(\gamma+1- K_0)^2}.\end{equation}

%%%%%%%%%%%%%%%%%%%%%%%%%%%%%%%%%%%%%%%%%%%%%%%%%%%%%%%%%%%%%%%%%%%%%%%%%%%%%%%%%%%%%%%%%%%%%%%%%%%%%%%%%%%%%%%%%%
\section{Model}
\label{sec:model}

Our model comprises $n$ blockchains: Ethereum and its rollups. Each blockchain has an identical AMM deployed with the liquidity pool of the same two cryptocurrencies and the same configuration.
\begin{itemize}
\item Each liquidity pool $i$ has a different $\text{TVL}_i$ and trading volume $\text{Vol}_i$. The trading fees $f$ are the same for each pool, and the total fees of the pool earned by LP are $fees_i := f\cdot \text{Vol}_i$.
\item Each liquidity pool $i$ consists of the same tokens and consequently has the same impermanent loss (IL).
\item Blockchains vary in gas fees and block production times.
\end{itemize}
In our model, we also consider that LPs have the possibility to stake and earn the staking rewards $r$. We consider the staking rate $r_s$ to be similar to the risk-free rate in traditional finance. Consequently, we assume that if liquidity pools do not provide higher rewards than the staking $r_s$, the LP reallocates her wealth to staking. Given that, within the Ethereum ecosystem, the volume of capital allocated to staking vastly exceeds that which is locked within liquidity pools~\cite{Gogol2023SoKLST}, we assume that the staking rate remains unaffected by staked capital in our model.

%There are three categories of DeFi users: traders, liquidity providers (LPs), and arbitrageurs. Arbitrageurs are a unique subset of traders who aim for nearly risk-free profits by taking advantage of price discrepancies across different DeFi protocols and centralized finance (CeFi), thereby balancing the prices. Additionally, there is a fourth category known as governance users.

\begin{comment}
\subsection{Trader}
We first build the model for the total costs of AMM swap for a trader.

\begin{align}
\label{eq:cost}
Total Swap Cost = L1 Fee + L2 Fee + LP Fee  + Price Impact + Slippage Cost
\end{align}
We ignore L1 and L2 fees and slippage as they do not depend on the swap volume. Let f be the LP fee charged on the volume. We assume for now that f is constant. In some AMMs like Curve, or upcoming Uniswap v4, f will be a function. We have:
$$Total Cost of Swap (Volume) = f * Volume  + Price Impact (Volume) $$
We know the formulas for price impact for Uniswap v2, v3, Curve v1 and v2. Let x and y be the amount of token reserves in AMM. For Constant Product Market Maker (CPMM) - Uniswap v2, we have:
$$Total Cost of Swap (Volume) = f * Volume  + \frac{Volume}{x} $$
\end{comment}

\subsection{Constant Product Market Maker}
Liquidity provider aims to optimize her earnings, which are directly related to the capital $w_{i}$ she contributes to the liquidity pool on blockchain $i$ with total value locked $\text{TVL}_i$ and total fees earned $fees_i$. The earnings for LP are calculated as follows: 
\begin{equation}
    \frac{w_i}{\text{TVL}_i + w_i} fees_i =  \frac{w_i}{\text{TVL}_i + w_i} f \cdot \text{Vol}_i
\end{equation}

%$$f \times Volume \frac{TVL_{LP}}{TVL}$$, 
%and $$f \times Volume \frac{x_{LP} + y_{LP}}{x + y \times S}$$.

\noindent
Thus, the return rate of LP on fees for allocating $w_i$ capital to the liquidity pool on blockchain $i$ is:

\begin{equation}\label{eq:return}
    r_i(w_i) = \frac{f \cdot \text{Vol}_i}{\text{TVL}_i + w_i}
\end{equation}

\noindent
Returns, as presented by DEX aggregators and the GUI of AMM-DEX, are $r_i(0)$. In order to calculate the LP return, impermanent loss (IL) must be deduced from $r_i(0)$.

\subsection{Concentrated Liquidity Market Maker (CLMM)}
Uniswap (v3) enables LPs to define the price range within which liquidity is supplied, thereby enhancing capital efficiency. Consequently, LPs accrue fees solely on the capital allocated to the specific tick where swaps occur. 
To compare the ETH-USDC pools on Uniswap (v3) across Ethereum and its rollups, each with its unique liqudity concentrations, we evaluate the profitability per unit of ambient (unbounded) liquidity and seek equilibrium conditions that equalize their returns.

Thus, the return of LP for allocating $w_i$ capital (unbounded) to the pool on the blockchain $i$ is the following:
\begin{equation}\label{eq:return_tick}
    r_i(w_i) = \frac{f \cdot \text{Vol}_i}{\text{TVL}_i^j + \frac{w_i}{m}}.
\end{equation}

\noindent
where $\text{TVL}_i^j$ is the liquidity in the current tick $j$ and $m$ is the number of ticks

%\subsection{Equilibrium Conditions.}
\subsection{Optimal Allocation}
Assume there is one LP. We denote by $w_0$ the amount of capital allocated to staking at the staking rate $r_s$ and for $i=1,\ldots,n$ we denote $w_i$ as the capital allocated to liquidity pool $i$. We are looking for a vector $\mathbf{w}=(w_0, w_1, \ldots, w_n)$ that maximizes LP's earnings. We assume that the LP has total liquidity $W$ to allocate to staking and AMM pools.
% \R{We consider two cases, i) when the market should be in equilibrium
% $\sum_{i=0}^{n}w_i = 0 $ and ii) when LP has liquidity to allocate to staking and AMM pools $\sum_{i=0}^{n}w_i = W > 0 $. In the course of the empirical analysis, we examine the dynamics of these vectors over time.}
By allocating $w_0$ to staking, the return from staking is $r_s w_0$.

% \textcolor{blue}{The return from staking $w_0$ is $r_s w_0$. If the return from each pools $i$ equals the return from staking, then there is no incentive for the LP to change to staking. Fix a $w_0$. For all $i>0$, we need $w_i r_i(w_i)=r_s w_0$. Using Equation~\eqref{eq:return} we need
% \begin{equation*}
%     w_i \frac{f \times \text{Vol}_i}{\text{TVL}_i + w_i} = r_s w_0
% \end{equation*}
% d s
% \begin{equation*}
%      \frac{f \times \text{Vol}_i}{\text{TVL}_i + w_i} = r_s 
% \end{equation*}
% d
% which we can rewrite to 
% \begin{equation}
%     w_i =  \frac{r_s w_0 \cdot \text{TVL}_i}{f \times \text{Vol}_i - r_s w_0}.
% \end{equation}
% The latter equation needs to hold for all $i>0$, which gives the following:
% Find $w_0$ s.t. $\left[ w_0 + \sum_{i=1}^n \frac{r_s w_0 \cdot \text{TVL}_i}{f \times \text{Vol}_i - r_s w_0} \right] = W$.
% }
Given an allocation vector $\mathbf{w}$, the total earnings for the LP are $r_s w_0 + \sum_{i=1}^n r_i(w_i)w_i$. The LP's objective is to maximize his earnings, that is, she faces the following optimization problem (using Equation~\eqref{eq:return}):
\begin{equation}\label{eq:optimizationProblem}
\max_{\mathbf{w} = (w_0,\ldots,w_n)}  r_s w_0 + \sum_{i=1}^{n} \frac{w_i}{\text{TVL}_i + w_i} f \cdot \text{Vol}_i
\end{equation}
subject to $w_i \geq 0$  for each $i = 0, \ldots, n$ and $\sum_{i=0}^{n} w_i = W$. 

Incorporating impermanent loss into the above model can easily be done by increasing the reference rate $r_s$ with the rate of impermanent loss, as each liquidity pool incurs an equal impermanent loss.

\begin{proposition}\label{prop: optallocation}
    The allocation vector $\mathbf{w}=(w_0,\ldots,w_n)$ with 
    \begin{align}\label{eq:prop1}
        w_i = \text{TVL}_i (\sqrt{r_i(0)r_s^{-1}} - 1),    \quad i=1,\ldots,n
    \end{align}
    and $w_0 = W - \sum_{i=1}^n w_i$ is the solution to \eqref{eq:optimizationProblem} and maximizes LP's earnings.
\end{proposition}

\begin{proof}
We define the Lagrangian $\mathcal{L}(\mathbf{w}, \lambda) = r_s w_0 + \sum_{i=1}^n \frac{f \cdot \text{Vol}_i}{\text{TVL}_i + w_i} w_i + \lambda \left( W - \sum_{i=0}^n w_i \right)$
where $w_i \geq 0$ and $\sum_{i=0}^{n} w_i = W$. Taking the derivatives with respect to $w_0$, $w_i$, and $\lambda$, and solving yields the optimal conditions (\ref{eq:prop1}). \qed
\end{proof}

%%%%%%%%%%%%%%%%%%%%%%%%%%%%%%%%%%%%%%%%%%%%%%%%%%%%%%%%%%%%%%%%%%%%%%%%%%%%%%%%%%%%%%%%%%%%%%%%%%%%%%%%%%%%%%%%%%
\subsection{Convergence}
Assume now that there are $m$ independent and identical LPs that invest into the staking and AMMs consecutively. Each action of an LP changes the return rates of the AMMs. Based on these changes other LPs will take action accordingly. In particular, an allocation of capital to a pool increases its TVL and thereby decreases the rate of return, see Equation~\eqref{eq:return}. If LPs allocate their capital consecutively, the rate of return of each pool will decrease until it reaches $r_s$. Once each pool reaches $r_s$, the remaining LPs will invest into staking. Formally, we have the following convergence result. Let $r^j_i$ be the rate of return for pool $i$, after $j-1$ LPs have invested their capital. Note that after each investment by an LP, the respective pool's TVLs increase by that amount.
\begin{proposition}\label{prop:convergence}
    Assuming $m$ LPs that consecutively invest into the staking and AMMs following \eqref{eq:prop1}. The rate of return from each pool converges to the staking rate $r_s$. In particular, $r_i^m(0) \overset{m \to \infty}{\to} r_s$ for all $i=1,\ldots,n$.
\end{proposition}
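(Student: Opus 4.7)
The plan is to convert the evolution of pool $i$'s posted return into a one-dimensional recurrence and show it is a geometric contraction toward $r_s$. First I would compute the TVL update produced by a single optimal deposit. Applying the formula from Proposition~1 to pool $i$, the $j$-th LP contributes $w_i^j = \text{TVL}_i^j\bigl(\sqrt{r_i^j(0)/r_s}-1\bigr)$, so the new pool size is $\text{TVL}_i^{j+1} = \text{TVL}_i^j\sqrt{r_i^j(0)/r_s}$, while the trading volume $\text{Vol}_i$ is unchanged.

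Substituting this TVL update into the definition $r_i^{j+1}(0) = f\cdot\text{Vol}_i/\text{TVL}_i^{j+1}$ and using $r_i^j(0) = f\cdot\text{Vol}_i/\text{TVL}_i^j$ collapses the dynamics into
\[
r_i^{j+1}(0) = \sqrt{r_i^j(0)\cdot r_s},
\]
so each iteration replaces the posted rate with the geometric mean of the current rate and the staking rate. I would then linearize via $b_j := \log\bigl(r_i^j(0)/r_s\bigr)$, which turns the recursion into $b_{j+1} = \tfrac12 b_j$, hence $b_j = b_1/2^{j-1} \to 0$ and therefore $r_i^j(0) \to r_s$ as $j\to\infty$, which is exactly the claim.

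The main subtlety I expect concerns the boundary condition $w_i\ge 0$ imposed in the optimization problem. If the initial rate satisfies $r_i^1(0)\le r_s$, Proposition~1 prescribes $w_i=0$: no LP ever deposits into pool $i$, so the rate is pinned at its initial value rather than rising to $r_s$. The proposition should therefore be read as asserting convergence for pools that are initially attractive relative to staking, which is precisely the regime covered by the recurrence above; the degenerate case is already consistent with the intended equilibrium interpretation since such a pool trivially satisfies $r_i\le r_s$ from the outset. Beyond handling this case, the argument reduces to a simple geometric contraction on the reals and needs no additional machinery.
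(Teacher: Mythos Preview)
Your proposal is correct and follows essentially the same route as the paper: both derive the one-step recurrence $r_i^{j+1}(0)=\sqrt{r_i^j(0)\,r_s}$ from the optimal deposit of Proposition~1 and then solve it explicitly, the paper by unrolling the product directly and you by the equivalent log-linearization $b_{j+1}=\tfrac12 b_j$. Your treatment of the boundary case $r_i^1(0)\le r_s$ is an observation the paper omits; otherwise the arguments coincide.
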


\begin{proof}
     LPs and let the LPs invest consecutively. The rate of return can be simplified using Proposition~\ref{prop: optallocation}. In particular, 
        $r_i^{j}(w^j_i) = \sqrt{ r_i^{j-1}(0) }  \sqrt{r_s}$
    Note that the latter is independent of $w_i^j$ and the convergence result follows $j=m$. \qed
\end{proof}

% Let's assume there are $m$ LPs: $LP_j$ that invest into the staking and AMMs subsequently.
% After the allocation of $LP_1$ the $\text{TVL}_i$ is  $\text{TVL}_i + w_i^1$ and the rate of return is:

% \[
% r_i^n(0) = \frac{f \cdot \text{Volumes}_i}{\text{TVL}_i^{n-1} + (\sqrt{\frac{f \cdot \text{Volumes}_i \cdot \text{TVL}_i^{n-1}}{r_s}} - \text{TVL}_i^{n-1})} 
% = \sqrt{ \frac{ f \cdot \text{Volumes}_i}{\text{TVL}_i^{n-1}} }  \sqrt{r_s}
% \]

% \[
% r_i^n(0) = 
% = \sqrt{ r_i^{n-1}(0) }  \sqrt{r_s} = (r_i^0)^{\frac{1}{2n}} * r_s^{\frac{1}{2} + \frac{1}{4} + ... + \frac{1}{2n}}
% \]
%From the equation above we can see that $r_i^n(0)$ converges to $r_s$ with n.

\begin{example}
    We illustrate in Fig.~\ref{fig:opt_allo} and Fig.~\ref{fig:ror} the convergence result from Prop.~\ref{prop:convergence} for one AMM and $m=8$ LPs. We assume a given wealth of $W=10'000'000$ per LP, a staking rate of $r_s=3.42\%$, a fee value $f=1$, $\text{Vol} = 400'000$ and $\text{TVL}=4'000'000$. The LPs sequentially allocate their funds to the AMM and/or to staking.
\begin{figure}[t]
  \centering
  \begin{minipage}{0.45\textwidth}
    \centering
    \includegraphics[width=\textwidth]{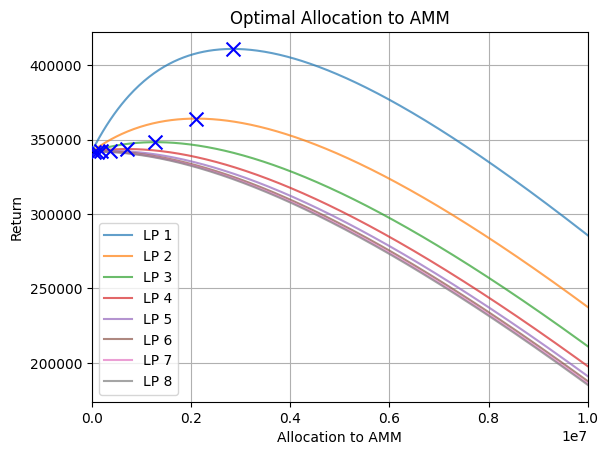}
    \caption{Optimal allocation for each LP (sequential allocation).}
    \label{fig:opt_allo}
  \end{minipage}
  \hfill
  \begin{minipage}{0.43\textwidth}
    \centering
    \includegraphics[width=\textwidth]{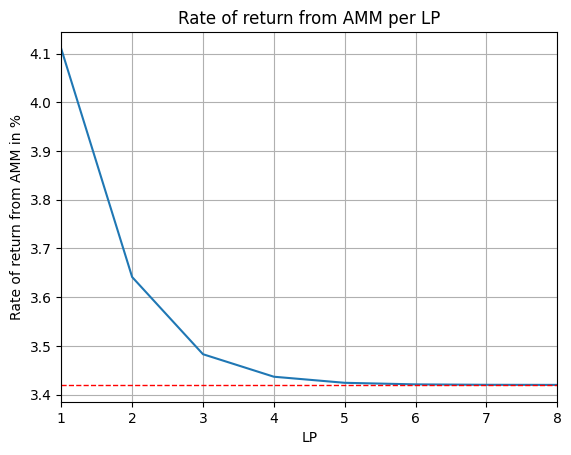}
    \caption{Rate of return of the AMM converges to the staking rate $r_s$}
    \label{fig:ror}
  \end{minipage}
\end{figure}

\end{example}

\subsection{Modeling Volume Dynamics}

We introduce a parameter for volume elasticity~\cite{2025Elasticity}, denoted by $\epsilon_v$. The elasticity parameter captures how sensitive trading activity is to changes in liquidity.
The trading volume follows the functional form 
\begin{equation}\label{eq: Vol}
\text{Vol} = k \cdot \text{TVL}^{\epsilon_v},
\end{equation}
where $k$ is a scaling constant. This form is particularly motivated because deeper liquidity pools (i.e. pools with higher liquidity) reduce price impact and slippage, incentivizing larger trades. 

\begin{itemize}
    \item $\epsilon_v > 1$: Highly elastic volume, where trading activity increases significantly as liquidity grows.
    \item $\epsilon_v \leq 1$: Inelastic volume, where trading volume grows slower than liquidity.
\end{itemize}
Following the previously introduced notation, the LP’s objective is to maximize their earnings:
\begin{equation}\label{eq:optimizationProblem_eps}
\max_{\mathbf{w} = (w_0,\ldots,w_n)}  r_s w_0 + \sum_{i=1}^{n} f \cdot k_i \cdot (\text{TVL}_i + w_i)^{\epsilon_v - 1} w_i
\end{equation}
subject to $w_i \geq 0$ for each $i = 0, \ldots, n$ and $\sum_{i=0}^{n} w_i = W$. Note that we replaced $\text{Vol}_i$ with \eqref{eq: Vol}, considering that the LP's contribution $w_i$ to pool $i$ must be added to $\text{TVL}_i$.

% \begin{proposition}
% The allocation vector $\mathbf{w} = (w_0, w_1, \ldots, w_n)$ with
% \begin{equation}\label{eq:optimal_allocation}
% w_i = \text{TVL}_i ( \sqrt{f \cdot k_i \cdot (\text{TVL}_i + w_i)^{\epsilon_v - 1}r_s^{-1}} - 1 ), \quad i = 1, \ldots, n,
% \end{equation}
% and $w_0 = W - \sum_{i=1}^{n} w_i,$
% is the solution to the optimization problem (\ref{eq:optimizationProblem_eps}) and maximizes LP's earnings.
% \end{proposition}

% \begin{proof}
% The statement is proved as in Proposition~\ref{prop: optallocation} using the Lagrangian $\mathcal{L}(\mathbf{w}, \lambda) = r_s w_0 + \sum_{i=1}^{n} \frac{f \cdot k_i \cdot (\text{TVL}_i + w_i)^{\epsilon_v - 1}}{\text{TVL}_i + w_i} w_i + \lambda \left(W - \sum_{i=0}^{n} w_i\right).$ \qed
% \end{proof}

\begin{proposition}
    The allocation vector $\mathbf{w} = (w_0, w_1, \ldots, w_n)$ which solves
    \begin{equation}\label{eq:optimal_allocation}
w_i = \left( \frac{r_s}{f k_i (\text{TVL}_i + \epsilon_v w_i)} \right)^{\frac{1}{\epsilon_v - 2}} - \text{TVL}_i, \quad i = 1, \ldots, n,
    \end{equation}
and $w_0 = W - \sum_{i=1}^{n} w_i,$
is the solution to the optimization problem (\ref{eq:optimizationProblem_eps}) and maximizes LP's earnings. For $w_i$ small with respect to TVL$_i$, the above equation can be approximated:
    \begin{equation}\label{eq:optimal_allocation_approx}
w_i \approx \left( \frac{r_s}{f k_i \text{TVL}_i } \right)^{\frac{1}{\epsilon_v - 2}} - \text{TVL}_i, \quad i = 1, \ldots, n.
    \end{equation}
\end{proposition}
\begin{proof}
The proof is included in Appendix~\ref{sec:proofs}.
%The statement is proved as in Proposition~\ref{prop: optallocation} using the Lagrangian $\mathcal{L}(\mathbf{w}, \lambda) = r_s w_0 + %\sum_{i=1}^{n} {f \cdot k_i \cdot (\text{TVL}_i + w_i)^{\epsilon_v - 1}} w_i + \lambda \left(W - \sum_{i=0}^{n} w_i\right).$ \qed
\end{proof}
The optimal allocation can be numerically calculated using \eqref{eq:optimal_allocation}. An approximate result for small $w_i$ is given in \eqref{eq:optimal_allocation_approx}.

\paragraph{Convergence.}

Assume $m$ identical independent LPs invest sequentially into staking and AMM pools. An allocation of capital to a pool increases its TVL and reduces the rate of return. The rate of return for pool $i$ after $j - 1$ LPs have invested is given by:
\begin{equation}\label{eq:conv}
r_i^j(w_i^j) = f \cdot k_i \cdot (\text{TVL}_i^{j-1} + w_i^j)^{\epsilon_v - 1}.    
\end{equation}

\begin{proposition}
For $\epsilon_v<1$, the return from each pool converges to the staking rate $r_s$ as $m \to \infty$. That is, $r_i^m(w_i^m) \to r_s \quad \text{as } m \to \infty \quad \text{for all } i = 1, \ldots, n.$
\end{proposition}

\begin{proof}
As LPs invest sequentially, the TVL for each pool increases, which decreases the marginal return $r_i^j(w_i^j)$ in \eqref{eq:conv}.
When $r_i^j(w_i^j) \leq r_s$, no rational LP will add more liquidity to that pool, preferring staking instead. Thus, the return from all AMM pools stabilizes at $r_s$ in the limit $m \to \infty$.
\end{proof}
Note that if $\epsilon_v>1$, increasing TVL increases the return instead of reducing it. As more LPs enter, trading volume scales more than liquidity, leading to increasing LP rewards rather than convergence to $r_s$. This creates a self-reinforcing effect, where AMM pools can keep attracting liquidity as returns remain high, contradicting the equilibrium assumption. This implies that LPs should always prefer AMM liquidity over staking when $\epsilon_v>1$, as staking will always yield lower returns.
For $\epsilon_v=1$, the rates of return are constant for each pool.

\begin{comment}
\subsection{Impact on the Trading Fee \textcolor{red}{subsection needed?}}
Slippage (Price impact), along with gas fees, block slippage, and LP fees, is one of the major components of the trading fee for traders at AMM. For the CPMM, the price impact is calculated as $ \frac{\Delta x}{x} $. Assuming equilibrium between the AMM pool and staking, we obtain that $\text{TVL} = f \cdot \text{Vol} / r_s$. Further, as $2x = \text{TVL}$ for CPMM, we can express the slippage as in CPMM as
\[
\rho = \frac{\Delta x}{x} = \frac{2r_s}{f \cdot \text{Vol}} \cdot \Delta x
\]
This can be observed because the price impact of trade is proportional to the stake rate and reversely proportional to LP fee ($f$) and volume in the pool.
\end{comment}
%%%%%%%%%%%%%%%%%%%%%%%%%%%%%%%%%%%%%%%%%%%%%%%%%%%%%%%%%%%%%%%%%%%%%%%%%%%%%%%%%%%%%%%%%%%%%%%%%%%%%%%%%%%%%%%%%%
\section{Empirical Analysis}
    \label{sec:dataset}

\begin{figure*}[!tbp]
  \centering
  % First row of the 2x2 grid
  \begin{subfigure}[t]{0.48\linewidth}
    \centering
    \includegraphics[width=\linewidth]{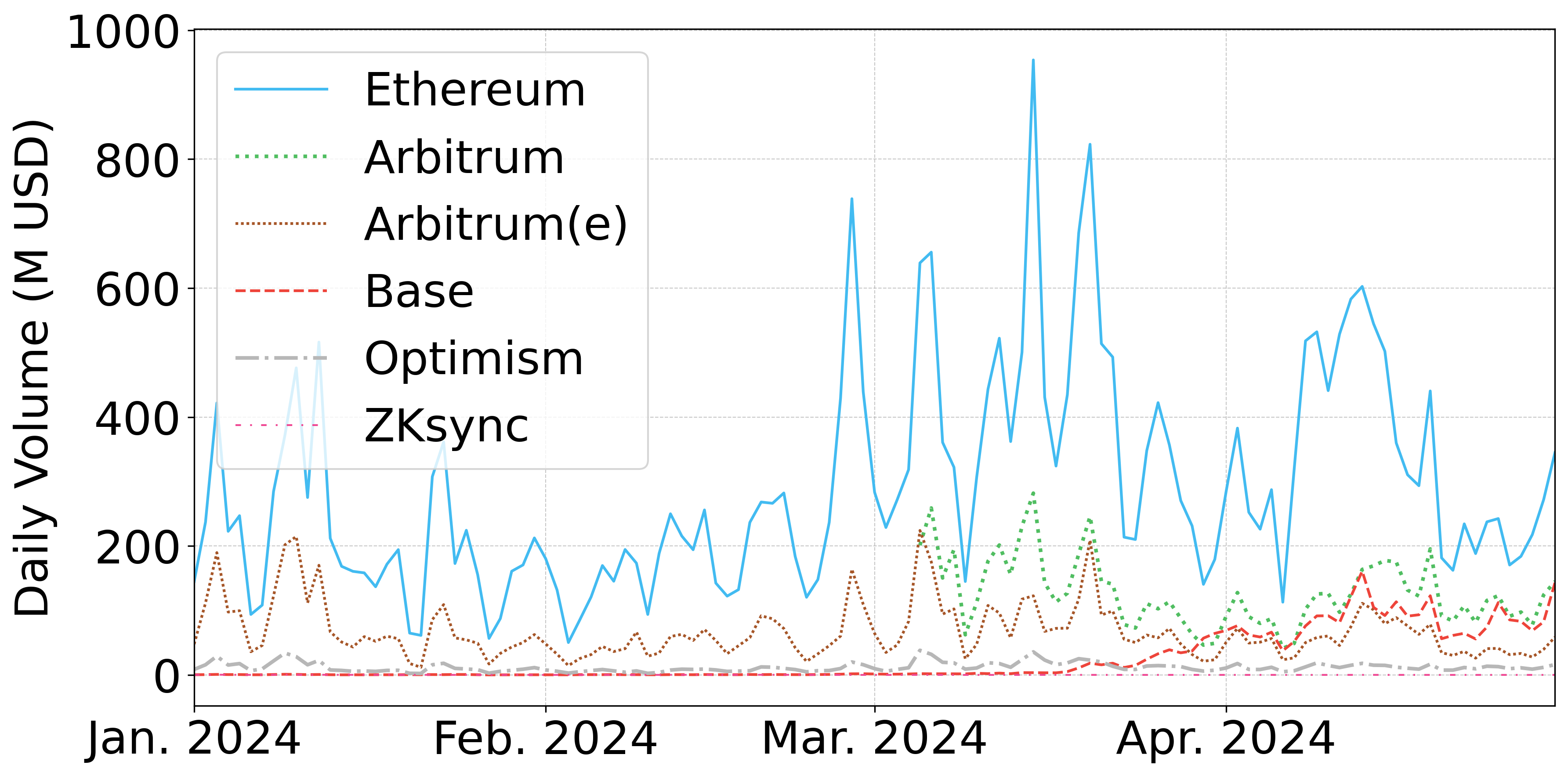}
    \caption{Daily Trading Volume.}
    \label{fig:volume}
  \end{subfigure}
  \hfill
  \begin{subfigure}[t]{0.48\linewidth}
    \centering
    \includegraphics[width=\linewidth]{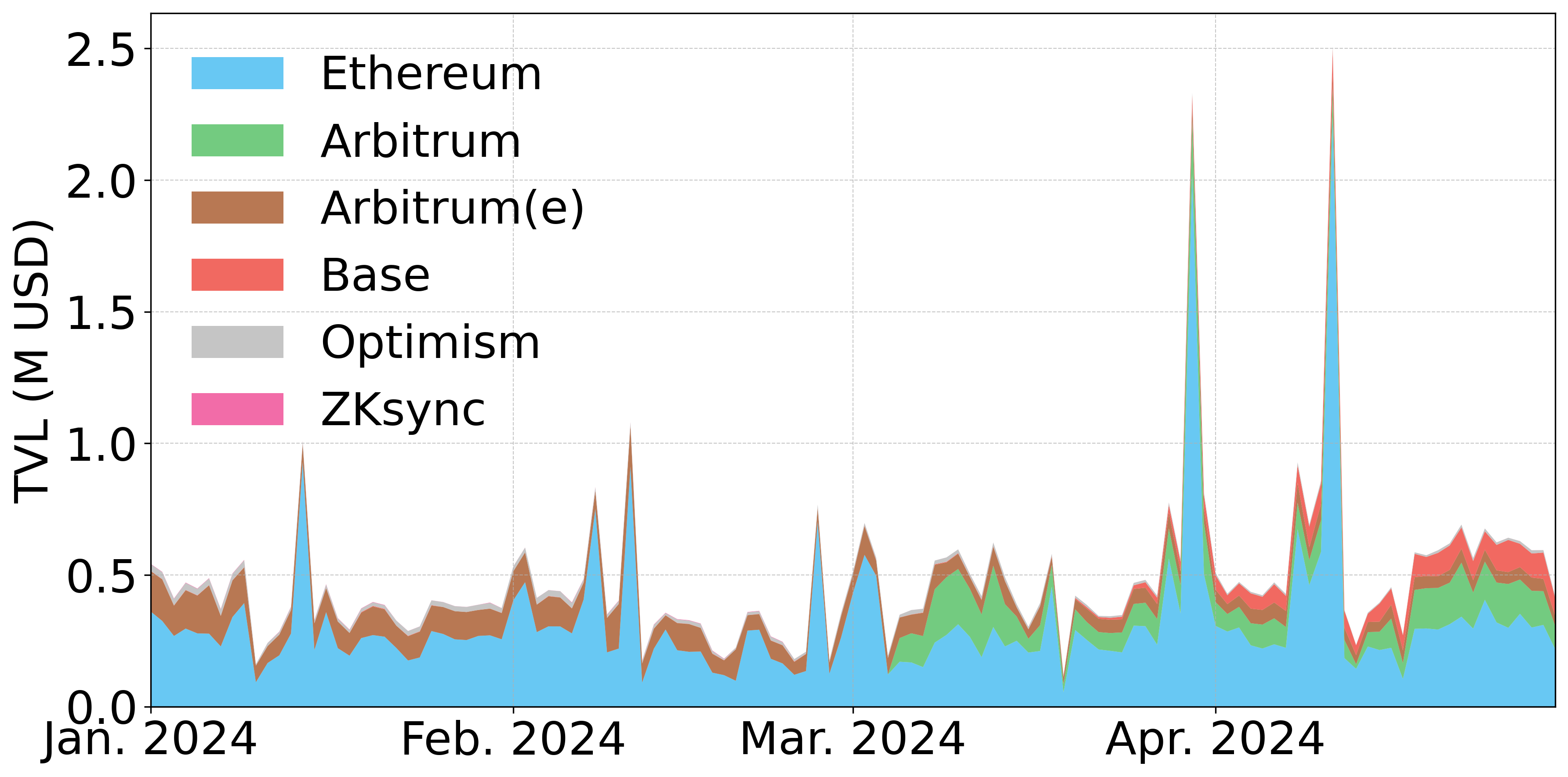}
    \caption{TVL in the Current Tick.}
    \label{fig:TVL}
  \end{subfigure}

  \vspace{0.5cm} % Add vertical spacing between rows

  % Second row of the 2x2 grid
  \begin{subfigure}[t]{0.48\linewidth}
    \centering
    \includegraphics[width=\linewidth]{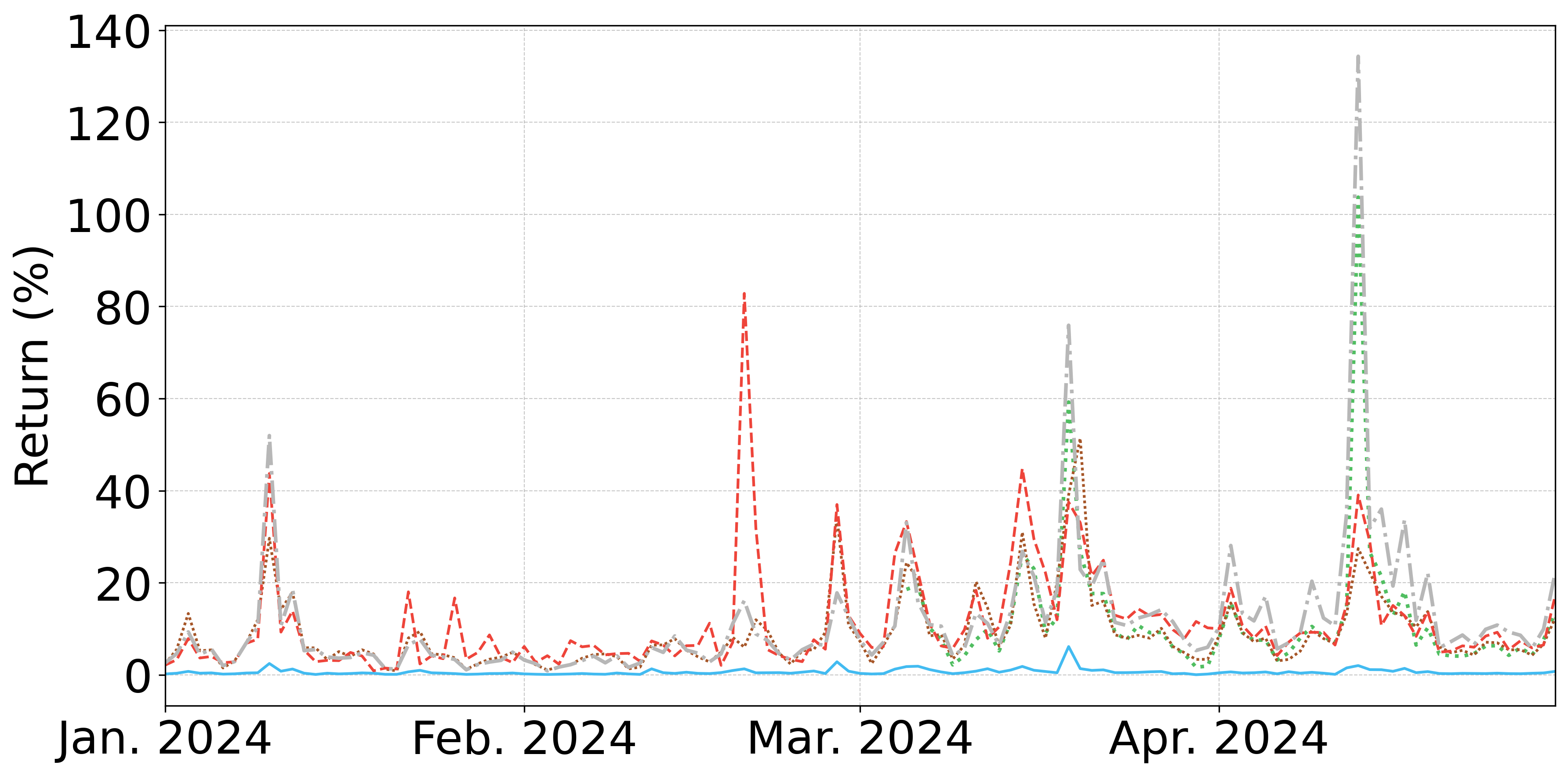}
    \caption{LP Return.}
    \label{fig:APY}
  \end{subfigure}
  \hfill
  \begin{subfigure}[t]{0.48\linewidth}
    \centering
    \includegraphics[width=\linewidth]{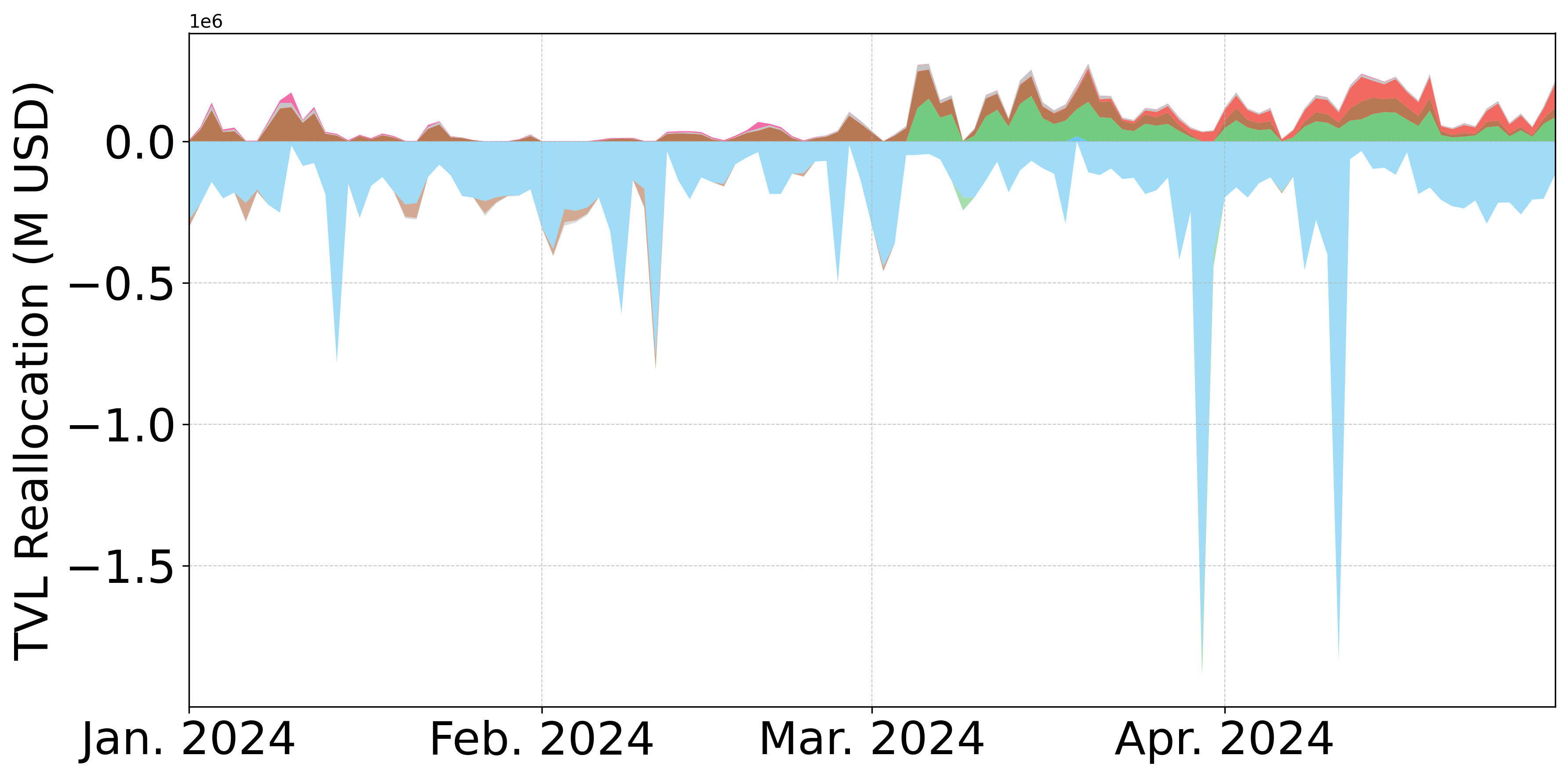}
    \caption{Liquidity Reallocation.}
    \label{fig:allocation}
  \end{subfigure}

  \caption{Overview of WETH-USDC liquidity pools on Uniswap (v3) across Ethereum and its rollups: trading volume, TVL in the current tick and LP returns. The last chart illustrates the reallocation of liquidity required to attain equilibrium. Liquidity that is not allocated to any pool is directed towards staking.}
  \label{fig:metrics_by_chain}
\end{figure*}

\parai{Data.}
For our empirical analysis, we focus on the WETH-USDC pools on Uniswap v3, as they exhibit the highest trading volume and TVL. Additionally, we always examine the WETH-USDC pool on Ethereum and across each L2 to ensure that all LPs are exposed to the same market risks. We collect on-chain data for every swap in these pools, covering the period from 2023 to the first half of 2024. The analysis includes Ethereum and its EVM-compatible L2s, specifically optimistic rollups such as Arbitrum, Optimism, and Base, as well as a ZK-rollups --- ZKsync. For Arbitrum, we analyze pools with native USDC and bridged USDC.e against WETH. The Arbitrum pool with bridged USDC.e we further denote as Arbitrum (e). LP fee in these pools is 5bps, with the exception of ZKsync pool with 20bps. The trading volume in these pools is shown in Figure~\ref{fig:volume}.

\parai{Methodology.}
For each day and each pool, we find the last swap and, based on the liquidity and current tick, we calculate TVL in the current tick, shown in Figure~\ref{fig:TVL}. Then, we calculate the returns of the AMM pool based on the TVL in the current tick using Equation~\eqref{eq:return_tick} and the optimal allocation using Equation~\eqref{eq:prop1}. We also assume that the LP provides liquidity in ticks around 12\% of the current spot price and, based on the finding of Adams~\cite{adams2024layer2layer2}, the liquidity concentration is 75\% higher on L2s compared to Ethereum. The annualized return of such LP position is depicted in Figure~\ref{fig:APY}.

%%%%%%%%%%%%%%%%%%%%%%%%%%%%%%%%%%%%%%%%%%%%%%%%%%%%%%%%%%%%%%%%%%%%%%%%%%%%%%%%%%%%%%%%%%%%%%
\parai{Empirical Calibration.} 
To model the relationship between trading volume and TVL in liquidity pools, we use the function defined in Equation~\eqref{eq: Vol}.
To estimate the parameters $k$ and $\epsilon_v$, we apply a logarithmic transformation to linearize the equation:
\begin{equation}
\log(\text{Vol}) = \log(k) + \epsilon_v \log(\text{TVL}),
\end{equation}
which can be expressed as a linear regression model 
$y = a + b x$
where \( y = \log(\text{Vol}) \), \( x = \log(\text{TVL}) \), \( a = \log(k) \), and \( b = \epsilon_v \).
%We perform a linear regression analysis on historical on-chain data for each liquidity pool.
%The slope of the regression line provides an estimate of \( \epsilon_v \), while the intercept gives \( \log(k) \), from which \( k \) can be obtained by exponentiating the intercept: $k = e^a$.
%The elasticity parameter \( \epsilon_v \) indicates how sensitive trading volume is to changes in TVL. An elasticity greater than 1 (\( \epsilon_v > 1 \)) suggests a highly elastic volume, where trading activity increases significantly as liquidity grows. Conversely, an elasticity of 1 or less (\( \epsilon_v \leq 1 \)) indicates inelastic volume, where trading volume grows slower than TVL.

\begin{table}
\centering
\setlength{\tabcolsep}{5pt}
\begin{tabular}{lrrrrr}
\toprule
    \textbf{Chain} &  \textbf{Elasticity} ($\epsilon_v$) &  \textbf{Scaling constant} ($k$) & \(R^2\) & SE(\(\epsilon_v\)) & SE(\(k\)) \\
\midrule
Ethereum  & -0.121 & 3.05e+09 & 0.012 & 0.101 & 2.091 \\ \hline
Arbitrum  & -0.177 & 3.95e+09 & 0.039 & 0.118 & 2.348 \\ \hline
Arbitrum (e) & -0.143 & 9.19e+08 & 0.010 & 0.128 & 2.488 \\ \hline
Base      &  1.045 & 0.118    & 0.895 & 0.033 & 0.539 \\ \hline
Optimism  & -0.178 & 2.37e+08 & 0.026 & 0.100 & 1.757 \\ \hline
ZKsync    &  0.654 & 4.701    & 0.556 & 0.054 & 0.746 \\
\bottomrule\\
\end{tabular}
\caption{Elasticity analysis across Ethereum and selected rollups. \(\epsilon_v\) represents the elasticity coefficient, \(k\) the scaling constant, \(R^2\) the goodness-of-fit, and SE denotes standard errors for \(\epsilon_v\) and \(k\).}\label{tab:elasticity_scaling}
\end{table}

\subsection{Interpretation of Elasticity and Scaling Constant Results}

Table~\ref{tab:elasticity_scaling} presents the estimated elasticity of volume with respect to virtual TVL and the scaling constant for six different chains. These metrics are derived from a log-log regression model that captures the relationship between TVL and trading volume. The key insights from the results are discussed below.

The elasticity of volume with respect to TVL for Ethereum is estimated to be $-0.12$. This suggests that, contrary to expectations, an increase in TVL on Ethereum is associated with a slight decrease in trading volume. However, the scaling constant indicates a high baseline volume relative to TVL, suggesting that Ethereum has a significant trading volume regardless of TVL changes.
Both Arbitrum and Arbitrum (e) pools exhibit negative elasticity values. These results imply that an increase in TVL leads to a decrease in trading volume on these chains. The scaling constants indicate that Arbitrum has a higher baseline volume compared to Arbitrum (e). Optimism displays a negative elasticity, similar to Arbitrum. The scaling constant for Optimism indicates a lower baseline volume compared to Arbitrum and Ethereum. Yet, the  $R^2$ values are for these chains remain low.

Base shows a positive elasticity value of $1.05$, indicating that a 1\% increase in TVL is associated with a 1.05\% increase in trading volume. The scaling constant for Base is $0.12$, which is significantly lower than other chains, suggesting that Base has a lower baseline trading volume. ZKsync also exhibits a positive elasticity value of $0.65$, indicating that trading volume is positively correlated with TVL on this chain. The scaling constant for zkSync is $4.70$, suggesting a relatively modest baseline volume compared to the other chains.

The elasticity results highlight the variability in the relationship between TVL and trading volume across different chains. While Base and ZKsync --- new blockchains (launched in 2023) with new deployments of Uniswap (v3) show positive elasticities, suggesting that TVL growth drives volume, Ethereum, Arbitrum, and Optimism --- well established and older chains - show negative elasticities with low $R^2$ values indicating that the relation between TVL and Volume cannot be easily established.  This can be attributed to differences in user behavior, liquidity distribution, or (lack of) protocol incentives on these chains. Arbitrum and Optimism were launched in 2021, and Ethereum in 2015.

%%%%%%%%%%%%%%%%%%%%%%%%%%%%%%%%%%%%%%%%%%%%%%%%%%%%%%%%%%%%%%%%%%%%%%%%%%%%%%%%%%%%%%%%%%%%%%
\subsection{Interpretation of Optimal Allocation Results}
%\parab{Results.}
Table \ref{tab:chain_data} presents the calculated results for the new LP willing to allocate \$212k liquidity among the pools on Ethereum and its L2s on 30th April 2024, assuming 3.47\% staking rate of ETH. If the new LP would possess more liquidity, the additional liquidity should be allocated to staking. If the LP would possess less liquidity, it should be allocated to the pools with the highest rewards first. The first observation is that the AMM pool on Ethereum is oversubscribed with a return rate lower than the staking rate. Thus, the current LPs would yield higher returns by reallocating their capital to L2s, or to staking. Second, the highest return presents the pool on Optimism --- 22.12\%, however, given the trading volume on Optimism, the highest capital allocation in optimal LP strategy is achieved for the pools on Arbitrum and BASE, approx. 80k USD. The returns of LP after allocating capital to the pools on L2s are presented in the last column of Table~\ref{tab:chain_data} and ranges between 6 and 8.5\%, significantly lower compared to the current returns of 12 to 22\%.

\begin{table*}[tp]
\centering
\setlength{\tabcolsep}{5pt}
\resizebox{\textwidth}{!}{%
\begin{tabular}{lrrrrr}
\toprule
\textbf{Chain} & \textbf{TVL} & \textbf{Daily Volume} & \textbf{Return (\%)} & \textbf{Allocation} & \textbf{LP Return (\%)} \\ \midrule
Ethereum   & \num{219361.63}  & \num{344883212.09}  & \num{3.03}  & -  & - \\ \hline
Arbitrum   & \num{87534.74}   & \num{145128466.78}  & \num{13.43} & \num{84699.60}   & \num{6.62} \\ \hline
Arbitrum(e)  & \num{38322.85}   & \num{58601891.49}   & \num{12.33} & \num{33913.61}   & \num{6.36} \\ \hline
Base       & \num{68627.34}   & \num{142939078.94}  & \num{17.16} & \num{83973.61}   & \num{7.38} \\ \hline
Optimism   & \num{6147.50}    & \num{16156525.13}   & \num{22.12} & \num{9372.02}    & \num{8.24} \\ \hline
ZKsync     & \num{103.96}     & \num{26430.69}      & \num{16.72} & \num{124.26}     & \num{7.30} \\ 
\bottomrule\\
\end{tabular}}
\caption{Calculations results for the new LP to the pools on Ethereum and its L2s on 30th April 2024: TVL in the current tick after the last swap of the day, the daily traded volume, and the annualized return of the pools. Further, the optimal allocation for the new LP assuming the staking rate of ETH equal to 3.47\%. The last column presents the returns for the LP, based on the given allocations. There is no allocation into the pool on Ethereum, as the return is below staking rate.}
\label{tab:chain_data}
\end{table*}

%The following data were extracted from event logs recorded on the blockchain by Automated Market Makers (AMMs): the reserves of tokens within the liquidity pool (token0 and token1 balances ascertained via the Sync method), the quantity of LP-tokens in circulation (determined by invoking the TokenSupply function on a historical node), and trading volumes (quantities of swap tokens derived from the Swap method).

\begin{figure}[t]
\centering
\includegraphics[width=\linewidth]{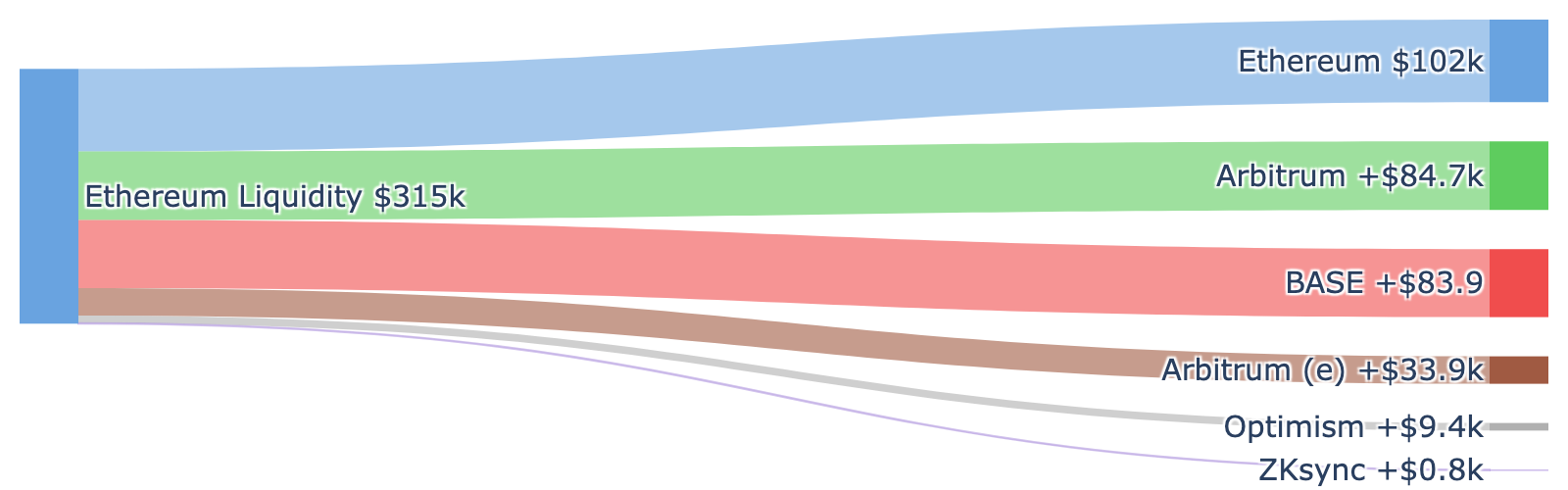}
\caption{The optimal redistribution of current liquidity among blockchains and with the staking rate to attain equilibrium as of 30th April. The source liquidity is the current tick within the Ethereum Uniswap (v3) WETH-USCD pool (\$315k) and corresponding target pools are: Ethereum (\$102k), Base (\$83.9k), Arbitrum (\$84.7k), Arbitrum(e) (\$33.9k), Optimism, (\$9.4k), and ZKsync (\$0.8k).}
\label{fig:30April}
\end{figure}

The optimal allocation formula can be is applied to identify the optimal allocation of current liquidity that is deployed to Uniswap (v3) pools across Ethereum and rollups. This is achieved by allowing the negative allocations and assuming a initial LP wealth of 0.  Figure~\ref{fig:volume} illustrates the re-allocation of liquidity among pools to attain equilibrium. The results indicate that the present allocation of liquidity to Ethereum is excessively high and should be redistributed to pools on alternative chains and towards staking. 

More detailed analysis is depicted in Figure~\ref{fig:30April} for 30th April 2024,. It presents that over 2/3 of current capital allocated to the current tick of the pool at Ethereum should be reallocated, mostly to pools on Arbitrum and Base. This corresponds to Figure~\ref{fig:volume} depicting trading volumes, which are only 2-3 times higher at Ethereum than Arbitrum and Base, very low current LP returns for liquidity provision at Ethereum (below staking rated) as depicted in Figure~\ref{fig:volume}.

\section{Discussion}
\label{sec:discussion}
A key question arising from our analysis is why LPs do not reallocate their capital to AMMs on L2s or to staking. Since our methodology ensures that all selected pools share the same market risk exposure, LPs’ reluctance toward L2s likely stems from security concerns, inertia, or a preference for Ethereum’s infrastructure. Potential security concerns include risks associated with centralized sequencers, the upgradeability of L2 protocols, and bridging vulnerabilities. Similar risk-related preferences have been observed among AMM traders on L2s~\cite{Chemaya2022Preferecnes}.

\parai{Cost of Bridging.}
The cost of transferring tokens between Ethereum and L2s is minimal—less than 1 basis point (0.01\%) of the trade volume when intent-based interoperability protocols~\cite{chitra2024Intent}, such as Across Protocol~\cite{2025Across}, are used. For instance, in the case of a \$100k transfer (as illustrated in Figure~\ref{fig:30April}), the cost would not exceed \$5 per rollup, with transaction times ranging from just a few seconds to a maximum of 15 minutes~\cite{2025Across}.
Furthermore, gas fees on Ethereum and L2s do not pose a significant barrier to token transfers, as these costs remain low—below \$2 on Ethereum and approximately \$0.10 on L2s~\cite{2025L2Fees}.

\parai{Time of Bridging.}
The time of transfer during which the tokens cannot earn LP fees lasts only seconds (up to 15 minutes), as bridges and intent-based interoperability protocols absorb the risk of L2 transaction not reaching the full finality on Ethereum (\textit{hard finality}), but assume trust in the centralized sequencers and execute the transfers once the transaction in included L2 block by the sequencer (\textit{soft finality})~\cite{yee2022shades}. Especially for optimistic rollups this approach reduces the finality time from 7 days of hard finality to 0.2-2.0s of soft finality.

\parai{Future Research.}
%Lastly, as L2s are cheaper and faster chains compared to Ethereum, LPs can adjust their CLMM positions more often, resulting in higher liquidity concentration on L2s~\cite{adams2024layer2layer2}, and higher returns from liquidity provisions. 
LP rewards should not only compensate for impermanent loss and the opportunity cost of not staking capital (loss-versus-staking)~\cite{gogol2024liquid} but should also reflect the volatility of the traded assets. AMM pools with more volatile tokens should yield higher returns relative to staking than pools with less volatile pairs.
This study focuses on the highly liquid WETH-USDC pool on Uniswap v3. Future research could expand the analysis to other pools, particularly those with lower liquidity or higher volatility (e.g., memecoin pools), where loss-versus-rebalancing (LVR)~\cite{milionis2023automated} and impermanent loss may significantly influence LP allocation decisions.
A sensitivity analysis that incorporates variations in key parameters, such as TVL, trading volume, and stake rates, or an analysis of other relationships between TVL and trading volume, could provide further insight into the behavior of LP.
%%%%%%%%%%%%%%%%%%%%%%%%%%%%%%%%%%%%%%%%%%%%%%%%%%%%%%%%%%%%%%%%%%%%%%%%%%%%%%%%%%%%%%%%%%%%%%%%%%%%%%%%%%%%%%%%%%
\section{Conclusions}
    \label{sec:conclusions}

The question of whether L2 blockchains cause the liquidity fragmentation of Ethereum has been the subject of extensive debate. Our research indicates that such fragmentation is currently not occurring, but, it could develop in the future—particularly if LPs become aware of potentially higher returns available on L2s.
Using Lagrangian optimization, we developed a model for the optimal liquidity provisions across AMMs on Ethereum and its L2s with staking as a reference rate. We showed that the returns of the AMM liquidity provision converge to the staking rate. 

In addition, we modeled and measured the elasticity of trading volume with respect to TVL at the WETH-ETH pools at Uniswap (v3) across Ethereum and rollups. Our finding indicate that at the well established blockchains - Ethereum, Arbitrum and Optimism, an increase in TVL is not associated with an increase in trading volume. In contrast, emerging blockchains, Base and ZKsync, exhibit a positive elasticity value, indicating that the volume of trade is positively correlated with TVL on these chains.

Finally, we empirically compared profitability of liquidity provisions to WETH-USDC pools, and observed that the Ethereum pool, compared to corresponding pools on L2s, is oversubscribed and often yields lower returns than staking Ether. Using historical trading volumes and TVLs, we calculated the optimal liquidity allocation for the new LP as well as the optimal capital reallocation across Ethereum and L2s. Specifically, we found that more than 66\% of the Ethereum liquidity could be reallocated to rollups to maximize LP returns and to achieve equilibrium with staking rates.

%------------------------------------------------------------------------------
\section*{Acknowledgments} 
The authors express their gratitude to Johnnatan Messias, Malte Schlosser, Aviv Yaish and Xin Wan for their insights and feedback during the course of this research.

Portions of this research were conducted by the first and fourth authors during their tenure at Matter Labs. This research article is a work of scholarship and reflects the authors' own views and opinions. It does not necessarily reflect the views or opinions of any other person or organization, including the authors' employer. Readers should not rely on this article for making strategic or commercial decisions, and the authors are not responsible for any losses that may result from such use.

%\newpage
\bibliographystyle{splncs04}
\bibliography{main}

\begin{thebibliography}{10}
\providecommand{\url}[1]{\texttt{#1}}
\providecommand{\urlprefix}{URL }
\providecommand{\doi}[1]{https://doi.org/#1}

\bibitem{2025Across}
{Across Protocol}: \url{https://across.to/}, accessed on February 5, 2025

\bibitem{adams2024layer2layer2}
Adams, A.: {Layer 2 be or Layer not 2 be: Scaling on Uniswap v3} (2024)

\bibitem{adams2024costs}
Adams, A., Chan, B.Y., Markovich, S., Wan, X.: {The Costs of Swapping on the Uniswap Protocol}. In: Financial Cryptography and Data Security (FC) (2024)

\bibitem{Adams2020UniswapCore}
Adams, H., Zinsmeister, N., Robinson, D.: {Uniswap v2 Core}  (2020)

\bibitem{Adams2021UniswapCore}
Adams, H., Zinsmeister, N., Salem~moody, M., River~Keefer, u., Robinson, D.: {Uniswap v3 Core}  (2021)

\bibitem{chaliasos2024formalrollups}
Chaliasos, S., Firsov, D., Livshits, B.: {Towards a Formal Foundation for Blockchain Rollups}. arXiv preprint arXiv:2406.16219  (2024)

\bibitem{Chaliasos@2024AnalyzingRollups}
Chaliasos, S., Reif, I., Torralba-Agell, A., Ernstberger, J., Kattis, A., Livshits, B.: {Analyzing and Benchmarking {ZK}-Rollups} (2024)

\bibitem{Chemaya2022Preferecnes}
Chemaya, N., Liu, D.: {Estimating Investor Preferences for Blockchain Security}. Available at SSRN 4119827  (2022)

\bibitem{chitra2024Intent}
Chitra, T., Kulkarni, K., Pai, M., Diamandis, T.: {An Analysis of Intent-Based Markets}. Available at arXiv arXiv:2403.02525  (2024)

\bibitem{2025L2Fees}
{CryptoStats}: {L2 Fees} (2025), \url{https://l2fees.info/}, accessed on February 5, 2025

\bibitem{Fritsch2021ConcentratedMakers}
Fritsch, R.: {Concentrated Liquidity in Automated Market Makers}. In: the ACM CCS Workshop on Decentralized Finance and Security (FC-DeFi) (2021)

\bibitem{fritsch2024profitability}
Fritsch, R., Canidio, A.: {Measuring Arbitrage Losses and Profitability of AMM Liquidity} (2024)

\bibitem{gangwal2022survey}
Gangwal, A., Gangavalli, H.R., Thirupathi, A.: {A Survey of Layer-Two Blockchain Protocols}. Journal of Network and Computer Applications, Vol 209  (2022)

\bibitem{gogol2024liquid}
Gogol, K., Fritsch, R., Schlosser, M., Messias, J., Kraner, B., Tessone, C.: {Liquid Staking Tokens in Automated Market Makers}. In: {The 5th International Conference on Mathematical Research for Blockchain Economy (Marble)} (2024)

\bibitem{Gogol2023SoKDeFi}
Gogol, K., Killer, C., Schlosser, M., Boeck, T., Stiller, B., Tessone, C.: {SoK: Decentralized Finance (DeFi) - Fundamentals, Taxonomy and Risks}. arXiv preprint arXiv:2404.11281  (2023)

\bibitem{gogol2024l2arbitrage}
Gogol, K., Messias, J., Miori, D., Tessone, C., Livshits, B.: {Layer-2 Arbitrage: An Empirical Analysis of Swap Dynamics and Price Disparities on Rollups}. arXiv preprint arXiv:2406.02172  (2024)

\bibitem{gogol2024quantifying}
Gogol, K., Messias, J., Miori, D., Tessone, C., Livshits, B.: {Quantifying Arbitrage in Automated Market Makers: An Empirical Study of Ethereum ZK Rollups}. In: {The 5th International Conference on Mathematical Research for Blockchain Economy (Marble)} (2024)

\bibitem{Gogol2023SoKLST}
Gogol, K., Velner, Y., Kraner, B., Tessone, C.: {SoK: Liquid Staking Tokens (LSTs) and Emerging Trends in Restaking}. arXiv preprint arXiv:2404.00644  (2023)

\bibitem{Heimbach2022RisksProviders}
Heimbach, L., Schertenleib, E., Wattenhofer, R.: {Risks and Returns of Uniswap V3 Liquidity Providers}. In: the 4th ACM Conference on Advances in Financial Technologies (AFT) (2022)

\bibitem{Heimbach2021BehaviorExchanges}
Heimbach, L., Wang, Y., Wattenhofer, R.: {Behavior of Liquidity Providers in Decentralized Exchanges}. arXiv preprint arXiv:2105.13822v2  (2021)

\bibitem{2025Elasticity}
{Investopedia}: {What Is Elasticity in Finance} (2025), \url{https://www.investopedia.com/terms/e/elasticity.asp}, accessed on March 1, 2025

\bibitem{Jensen2021DeFi}
Jensen, J., von Wachter, V., Ross, O.: {An Introduction to Decentralized Finance (DeFi)}. Complex Systems Informatics and Modeling Quarterly  (2021)

\bibitem{Kalodner@2018Arbitrum}
Kalodner, H., Goldfeder, S., Chen, X., Weinberg, S.M., Felten, E.W.: {Arbitrum: scalable, private smart contracts}. In: Proceedings of the 27th USENIX Conference on Security Symposium (2018)

\bibitem{Lashkari@2021Consensus}
Lashkari, B., Musilek, P.: {A Comprehensive Review of Blockchain Consensus Mechanisms}. IEEE Access  (2021)

\bibitem{Loesch2021ImpermanentV3}
Loesch, S., Hindman, N., Welch, N., Richardson, M.B.: {Impermanent Loss in Uniswap v3}. arXiv preprint arXiv:2111.09192  (2021)

\bibitem{milionis2023automated}
Milionis, J., Moallemi, C.C., Roughgarden, T., Zhang, A.L.: {Automated Market Making and Loss-Versus-Rebalancing}. arXiv preprint arXiv:2208.06046v5  (2022)

\bibitem{miori2023defi}
Miori, D., Cucuringu, M.: {DeFi: Modeling and Forecasting Trading Volume on Uniswap v3 Liquidity Pools}. Available at SSRN 4445351  (2023)

\bibitem{miori2024clustering}
Miori, D., Cucuringu, M.: {Clustering Uniswap v3 traders from their activity on multiple liquidity pools, via novel graph embeddings}. Digital Finance pp. 1--31 (2024)

\bibitem{delmonte2020decentralization}
Monte, G.D., Pennino, D., Pizzonia, M.: Scaling blockchains without giving up decentralization and security. In: {Proceedings of the Workshop on Cryptocurrencies and Blockchains for Distributed Systems} (2020)

\bibitem{Ruetschi2024DeFiTaxonomy}
Rüetschi, M., Campajola, C., Tessone, C.J.: {How Do Decentralized Finance Protocols Compare to Traditional Financial Products? A Taxonomic Approach}. Ledger  \textbf{9} (2024), review Article

\bibitem{sguanci2021layer2}
Sguanci, C., Spatafora, R., Vergani, A.M.: Layer 2 blockchain scaling: a survey. arXiv preprint arXiv:2107.10881  (2021)

\bibitem{Thibault2022SoKRollup}
Thibault, L.T., Sarry, T., Hafid, A.S.: {Blockchain Scaling Using Rollups: A Comprehensive Survey}. IEEE Access, Vol. 10  (2022)

\bibitem{Thibault@2022rollups}
Thibault, L.T., Sarry, T., Hafid, A.S.: {Blockchain Scaling Using Rollups: A Comprehensive Survey}. IEEE Access  (2022)

\bibitem{Tiruviluamala2022AMakers}
Tiruviluamala, N., Port, A., Lewis, E.: {A General Framework for Impermanent Loss in Automated Market Makers} (2022)

\bibitem{torres2024rolling}
Torres, C.F., Mamuti, A., Weintraub, B., Nita-Rotaru, C., Shinde, S.: {Rolling in the Shadows: Analyzing the Extraction of MEV Across Layer-2 Rollups}. arXiv preprint arXiv:2405.00138  (2024)

\bibitem{Wang@2019Sharding}
Wang, G., Shi, Z.J., Nixon, M., Han, S.: {SoK: Sharding on Blockchain}. In: Proceedings of the 1st ACM Conference on Advances in Financial Technologies (2019)

\bibitem{Xu2021SoK:Protocols}
Xu, J., Paruch, K., Cousaert, S., Feng, Y.: {SoK: Decentralized Exchanges (DEX) with Automated Market Maker (AMM) Protocols}. ACM Computing Surveys, Vol. 55, No. 11  (2021)

\bibitem{Yaish2023Suboptimality}
Yaish, A., Dotan, M., Qin, K., Zohar, A., Gervais, A.: Suboptimality in {DeFi}. Cryptology {ePrint} Archive, Paper 2023/892  (2023)

\bibitem{yee2022shades}
Yee, B., Song, D., McCorry, P., Buckland, C.: {Shades of Finality and Layer 2 Scaling}. arXiv preprint arXiv:2201.07920  (2022)

\end{thebibliography}
	
\appendix
\section{Proof of Proposition 3}
\label{sec:proofs}

The Lagrangian function is given by:

\[
\mathcal{L}(\mathbf{w}, \lambda) = r_s w_0 + \sum_{i=1}^{n} f k_i (\text{TVL}_i + w_i)^{\epsilon_v - 1} w_i + \lambda \left( W - \sum_{i=0}^{n} w_i \right).
\]
Derivative w.r.t $w_0$:
\[
\frac{\partial \mathcal{L}}{\partial w_0} = r_s - \lambda = 0.
\]

Solving for \( \lambda \):

\[
\lambda = r_s.
\]

For \( i = 1, \dots, n \):

\[
\frac{\partial \mathcal{L}}{\partial w_i} = f k_i \left[ (\text{TVL}_i + w_i)^{\epsilon_v - 1} + (\epsilon_v - 1) (\text{TVL}_i + w_i)^{\epsilon_v - 2} w_i \right] - \lambda = 0.
\]

Substituting \( \lambda = r_s \):

\[
f k_i \left[ (\text{TVL}_i + w_i)^{\epsilon_v - 1} + (\epsilon_v - 1) (\text{TVL}_i + w_i)^{\epsilon_v - 2} w_i \right] = r_s.
\]

Factor out \( (\text{TVL}_i + w_i)^{\epsilon_v - 2} \):

\[
f k_i (\text{TVL}_i + w_i)^{\epsilon_v - 2} \left[ (\text{TVL}_i + w_i) + (\epsilon_v - 1) w_i \right] = r_s.
\]

Since 

\[
(\text{TVL}_i + w_i) + (\epsilon_v - 1) w_i = \text{TVL}_i + \epsilon_v w_i,
\]

we rewrite the equation as:

\[
f k_i (\text{TVL}_i + w_i)^{\epsilon_v - 2} (\text{TVL}_i + \epsilon_v w_i) = r_s.
\]

Dividing both sides by \( f k_i (\text{TVL}_i + \epsilon_v w_i) \), we get:

\[
(\text{TVL}_i + w_i)^{\epsilon_v - 2} = \frac{r_s}{f k_i (\text{TVL}_i + \epsilon_v w_i)}.
\]

Taking the power \( \frac{1}{\epsilon_v - 2} \) on both sides:

\[
\text{TVL}_i + w_i = \left( \frac{r_s}{f k_i (\text{TVL}_i + \epsilon_v w_i)} \right)^{\frac{1}{\epsilon_v - 2}}.
\]

Solving for \( w_i \):

\[
w_i = \left( \frac{r_s}{f k_i (\text{TVL}_i + \epsilon_v w_i)} \right)^{\frac{1}{\epsilon_v - 2}} - \text{TVL}_i.
\]

Finally, the budget constraint must hold:

\[
W = w_0 + \sum_{i=1}^{n} w_i. \qed
\]

\end{document}